\newtheorem{theorem}{Theorem}
\newtheorem{corollary}{Corollary}
\theoremstyle{definition}
\newtheorem{definition}{Definition}
\theoremstyle{remark}
\newtheorem{remark}{Remark}
\theoremstyle{definition}
\theoremstyle{definition}
\newcommand{\C}{\mathcal{C}}
\definecolor{darkblue}{RGB}{0,0,102}
\definecolor{lightblue}{RGB}{77,77,148}
\definecolor{gold}{RGB}{234, 170, 0}
\definecolor{metallic_gold}{RGB}{139, 111, 78}
\DeclareMathOperator*{\esssup}{ess\,sup}
\DeclareMathOperator{\diag}{diag}
\DeclareMathOperator*{\argmin}{arg\,min}
\def\BibTeX{{\rm B\kern-.05em{\sc i\kern-.025em b}\kern-.08em
    T\kern-.1667em\lower.7ex\hbox{E}\kern-.125emX}}
\begin{document}

\title{
Rollover Prevention for Mobile Robots with Control Barrier Functions: Differentiator-Based Adaptation and Projection-to-State Safety
}

\author{Ersin Da\c{s}, \IEEEmembership{Member, IEEE}, Aaron D. Ames, \IEEEmembership{Fellow, IEEE}, and Joel W. Burdick, \IEEEmembership{Member, IEEE}
\thanks{*This work was supported by DARPA under the LINC program.}
\thanks{The authors are with the Department of Mechanical and Civil Engineering, California Institute of Technology, Pasadena, CA 91125, USA. ${\tt\small \{ersindas, ames, jburdick \}@caltech.edu}$ } }

\maketitle
\thispagestyle{empty}         
% Removes the page number in the first page
%\pagestyle{plain}

\begin{abstract}
This paper develops rollover prevention guarantees for mobile robots using control barrier function (CBF) theory, and demonstrates the method experimentally. We consider a safety measure based on a zero moment point condition through the lens of CBFs. However, these conditions depend on time-varying and noisy parameters. To address this issue, we present a differentiator-based safety-critical controller that estimates these parameters and pairs Input-to-State Stable (ISS) differentiator dynamics with CBFs to achieve rigorous safety guarantees. Additionally, to ensure safety in the presence of disturbances, we utilize a time-varying extension of Projection-to-State Safety (PSSf). The effectiveness of the proposed method is demonstrated via experiments on a tracked robot with a rollover potential on steep slopes.
\end{abstract}

\begin{IEEEkeywords}
Rollover prevention, control barrier functions, constrained control, robotics, uncertain systems
\end{IEEEkeywords}

%\vskip -0.1 true in
%%%%%%%%%%%%%%%%%%%%%%%%%%%%%%%%%%%%%%%%%%%%%%%%%%%%%%%%%%%%
%%%%%%%%%%%%%%%%%%%%%%%%%%%%%%%%%%%%%%%%%%%%%%%%%%%%%%%%%%%%
\section{Introduction} 
\label{sec:intro}
%%%%%%%%%%%%%%%%%%%%%%%%%%%%%%%%%%%%%%%%%%%%%%%%%%%%%%%%%%%%
%%%%%%%%%%%%%%%%%%%%%%%%%%%%%%%%%%%%%%%%%%%%%%%%%%%%%%%%%%%%
\IEEEPARstart{A}{utonomous} robotic systems are increasingly deployed in complex and real-world environments, prompting a corresponding rise in the importance of developing safety-critical control methods \cite{medeiros2020}. As mobile robots often operate on uneven terrains and in dynamic conditions, preventing rollover is a vital aspect of their design and operation \cite{jeon2023planned}. Improved rollover safety not only improves the overall safety profile of mobile robots but also significantly contributes to their reliability and effectiveness in real-world applications.

Several methods measure the risk of rollover in mobile robots, including stability measures like force-angle stability, moment-height stability, and zero moment point (ZMP) \cite{roan2010real}. Leveraging these characterizations, a variety of control techniques have been developed to prevent rollovers: nonlinear programming \cite{de2019trajectory}, chance-constrained optimal control \cite{song2023chance}, and invariance control \cite{lee2012rollover}. These methods often rely on high-fidelity models or require numerous sensors, which may limit their practical applicability in real-world scenarios. The goal of this paper is to develop a new approach for rollover avoidance that is both rigorous, but also implementable. 

Safety is often framed as forward set invariance; guaranteeing that system states stay within a predetermined set ensures system safety. Control barrier functions (CBFs) \cite{ames2017control} have emerged as a tool for synthesizing controllers that guarantee forward invariance of a given safe set. The CBF framework also leads to safety filters, which have been successfully applied in various domains \cite{wabersich2023data}. These filters alter control inputs only when necessary for safety. However, accurate system dynamic models are needed for safety guarantees when controllers are synthesized via CBFs. Thus, the presence of unmodeled system dynamics causes uncertainty in the CBF condition, potentially leading to safety constraint violation. 
\begin{figure}[t]
    \centering
    %\vspace{0.22cm}
\includegraphics[width=1\linewidth]{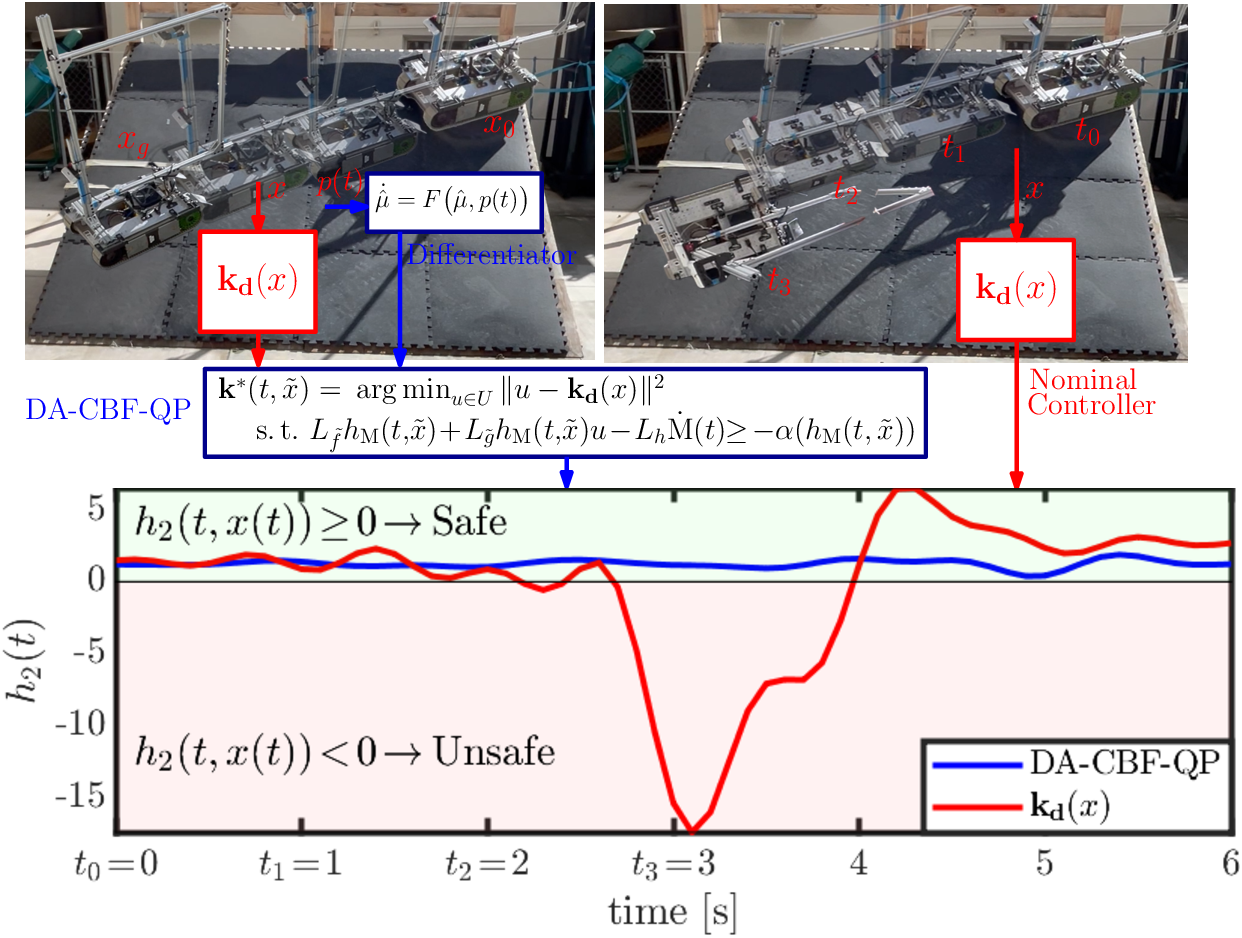}
    \caption{Experimental results for robot rollover prevention. The proposed DA-CBF-QP safety filter maintains safety (Left, video montage of robot motion). However, under the nominal controller, the robot leaves the safe set (Right). The value of CBF $h$ vs. time (Bottom).
    }
   \label{fig:main}
   %\vspace{-6 mm}
	\vskip - 6mm
\end{figure}

Projection-to-State Safety (PSSf) \cite{taylor2020} builds upon the notion of Input-to-State Safety (ISSf) \cite{kolathaya2018} to establish a framework for quantifying the effect of uncertainty or disturbances on safety guarantees. But ZMP-based rollover constraints require estimates on the noisy derivative of the gravity vector. Safety-critical control in dynamic environments via CBFs was proposed in \cite{molnar2021safety} by using constant worst-case bounds for the time-varying parameters, which may result in undesired conservativeness. CBFs coupled with estimators can address the moving obstacles avoidance problem \cite{tezuka2020}. However, the extension to address broader dynamic parameter-dependent safe control design problems has not yet been considered.

This paper presents a framework for synthesizing safety filters that are robust to time-varying parameters. We introduce \textit{differentiator-adaptive CBFs} (DA-CBFs) that consider the time-derivatives of time-varying parameters that are necessary to enforce CBF conditions. When the differentiator dynamics are ISS with respect to noise, the result is a new time-varying CBF whose satisfaction ensures safety. Moreover, to address model uncertainty in the time derivative of a time-varying CBF, we define an extension of PSSf, \textit{time-varying PSSf} (tPSSf). The main result gives conditions on DA-CBFs such that PSSf is guaranteed. Practically, these contributions enable robust rollover prevention for mobile robots via the synthesis of CBFs from ZMP constraints. We validate the efficacy of the proposed approach through experiments on a tracked mobile robot (Fig~\ref{fig:main}) encountering rollover issues triggered by slopes.

%%%%%%%%%%%%%%%%%%%%%%%%%%%%%%%%%%%%%%%%%%%%%%%%%%%%
%%%%%%%%%%%%%%%%%%%%%%%%%%%%%%%%%%%%%%%%%%%%%%%%%%%%
%\vskip -0.25 true in
\section{Preliminaries} 
\label{sec:pre}
%\vspace{-10mm}
%%%%%%%%%%%%%%%%%%%%%%%%%%%%%%%%%%%%%%%%%%%%%%%%%%%%
%%%%%%%%%%%%%%%%%%%%%%%%%%%%%%%%%%%%%%%%%%%%%%%%%%%%
Consider a nonlinear control affine system of the form:
\begin{equation}
\label{system}
    \dot{x}  = f(x) + g(x) u,
\end{equation}
where ${x \!\in\! X \!\subset\! \mathbb{R}^n}$ is the state, ${f\!:\! X \!\to\! \mathbb{R}^n}$, ${g\!:\! X \!\to\! \mathbb{R}^{n \times m} }$ are locally Lipschitz continuous on the open and connected set $X$, and ${u \!\in\! U \!\subset\! \mathbb{R}^m}$ is the control input. A locally Lipschitz continuous controller ${u \!=\! \mathbf{k}(x)}$, with ${\mathbf{k}\!:\! X \!\to\! U}$, yields a locally Lipschitz continuous \textit{closed-loop} control system, ${f_{\rm cl}\!:\! X \!\to\! \mathbb{R}^n}$:
\begin{equation}
\label{eq:clsystem1}
    \dot{x} = {f}(x) + {g}(x) \mathbf{k}(x) \triangleq f_{\rm cl}(x).
\end{equation}
Hence, given any initial condition ${x_0 \!\triangleq\! x(t_0) \!\in\! X}$ there exists an interval $\mathcal{I} ({x}_{0} ) \!\triangleq\! \left[t_0, t_{\max }\right)$ such that 
\begin{equation}
\label{eq:flow1}
     {x}(t) = x_0 + \int^{t}_{t_0} f_{\rm cl}(x(\tau)) d\tau ,~t > t_0
\end{equation}
is the unique solution to \eqref{eq:clsystem1} for ${t \!\in\! \mathcal{I} ({x}_{0} )}$; see \cite{daleckii2002}. Throughout this study we assume ${f_{\rm cl}}$ is forward complete, i.e., ${\mathcal{I} ({x}_{0} ) \!=\! [0, \infty)}$, and ${U}$ is a convex polytope. 

In this paper, the system is considered safe as long as its defined state remains within a non-empty set ${\C \!\subset\! X }$. In particular, let the set ${\C }$ be the 0-superlevel set of a continuously differentiable function ${h\!:\! X \!\to\! \mathbb{R}}$:
\begin{align*}
\begin{split}
\label{CBF1}
    \C \!\triangleq\! \left\{ x \!\in\! X \!\subset\! \mathbb{R}^n \!:\! h(x) \!\geq\! 0 \right\}, \\
    \partial \C \!\triangleq\! \left\{ {x \!\in\! X \!\subset\! \mathbb{R}^n} \!:\! h(x) \!=\! 0 \right\}. 
\end{split}
\end{align*}
This set is \textit{forward invariant} if, for any initial condition ${x(0) \!\in\! \C}$, the solution \eqref{eq:flow1} satisfies ${x(t) \!\in\! \C, ~\forall t \!\geq\! 0}$. The closed-loop system \eqref{eq:clsystem1} is {\em safe} on the \textit{safe set} $\C$ if $\C$ is forward invariant. CBFs \cite{ames2017control} have been proposed to synthesize safety-critical controllers that can ensure forward invariance. 
\begin{definition}[Control Barrier Function, \cite{ames2017control}]
\label{def:cbf}
Let ${\C \!\subset\! X }$ be the 0-superlevel set of a continuously differentiable function ${h\!:\! X \!\to\! \mathbb{R}}$. 
The function $h$ is a \textit{control barrier function} for system (\ref{system}) on $\C$ if ${\frac{\partial h}{\partial x} \neq 0}$ for all ${ x \!\in\! \partial \C}$ and there exists an extended class-$\mathcal{K}_{\infty}$ function\footnote{ 
A continuous function ${\alpha \!:\! [0, a ) \!\to\! \mathbb{R}^+}$, where ${a \!>\! 0}$, belongs to class-${\mathcal{K}}$ (${\alpha \!\in\! \mathcal{K}}$) if it is strictly monotonically increasing and ${\alpha(0) \!=\! 0}$. And, ${\alpha}$ belongs to class-${\mathcal{K}_\infty}$ (${\alpha \!\in\! \mathcal{K}_\infty}$) if ${a \!=\! \infty}$ and ${\lim_{r \!\to\! \infty} \alpha(r) \!=\! \infty.}$ A continuous function ${\alpha \!:\! \mathbb{R} \!\to\! \mathbb{R}}$ belongs to the set of extended class-$\mathcal{K}_\infty$ functions (${\alpha \!\in\! \mathcal{K}_{\infty, e}}$) if it is strictly monotonically increasing, ${\alpha(0) \!=\! 0}$, ${\lim_{r \!\to\! \infty} \alpha(r) \!=\! \infty}$ and ${\lim_{r \!\to\! -\infty} \alpha(r) \!=\! -\infty}$. A continuous function ${\beta \!:\! [0, a ) \!\times\! \mathbb{R}^+_0 \!\to\! \mathbb{R}^+_0}$ belongs to class-${\mathcal{K L}}$ (${\beta \!\in\! \mathcal{KL}}$), if for every ${s \!\in\! \mathbb{R}^+_0}$, ${\beta(\cdot, s)}$ is a class-$\mathcal{K}$ function and for every ${r \!\in\! [0, a )}$, ${\beta(r, \cdot) }$ is decreasing and ${\lim_{s \!\to\! \infty} \beta(r, s) \!=\! 0.}$} $\alpha \!\in\! \mathcal{K}_{\infty, e}$ such that for all ${x \!\in\! \C}$:
\begin{equation}
\label{cbf}
   \sup_{u \in U} \big [ {\dot{h}(x, u)}  \big ] 
   \!=\! \sup_{u \in U} \big [ L_f h(x) + L_g h(x) u  \big ] 
   \!\geq\! -\alpha (h(x)),
\end{equation}
where ${L_f h \!:\! X \!\to\! \mathbb{R}}$, ${L_g h \!:\! X \!\to\! \mathbb{R}^m}$ are Lie derivatives. 
\end{definition}
Note that, when ${U \!=\! \mathbb{R}^m}$, \eqref{cbf} is equivalent to:
\begin{equation}
\label{eq:bbfcheck}
\forall x\! \in\! X\!: \ L_{g} h(x)=0 \implies L_{f} h(x) \geq-\alpha(h(x)),
\end{equation}
which implies that if ${U \!=\! \mathbb{R}^m}$, satisfaction of condition \eqref{cbf} at states where ${L_{g} h(x)\!=\!0}$ is necessary and sufficient for the verification of a CBF. We note that for a bounded control input, i.e., ${u \!\in\! U \!\subset\! \mathbb{R}^m}$, \eqref{eq:bbfcheck} is a necessary (but not sufficient) condition for \eqref{cbf}.

Given a CBF ${h}$ and a corresponding ${\alpha}$ for \eqref{system}, the pointwise set of all control values that satisfy \eqref{cbf} is given by
\begin{equation*}
%\label{eq:setCBFkx} 
    K_\text{CBF}(x) \triangleq \big\{ u \in U  \big |  \dot{h}(x, u) \geq - \alpha (h(x)) \big\}.
\end{equation*}
We can establish formal safety guarantees based on Definition~\ref{def:cbf} with the help of the following theorem \cite{ames2017control}:
\begin{theorem}
    \label{teo:cbfdef}
    If $h$ is a CBF for \eqref{system} on $\C$ with an ${\alpha \!\in\! \mathcal{K}_{\infty, e}}$, then any Lipschitz continuous controller ${\mathbf{k}\!:\! X \!\to\! U}$ satisfying 
    \begin{equation}
        \label{eq:cbf_def}
        \dot{h}\left(x, \mathbf{k}(x)\right) \geq - \alpha (h(x)),~~\forall x \in \C ,
    \end{equation}
    renders \eqref{eq:clsystem1} safe with respect to $\C$. 
\end{theorem}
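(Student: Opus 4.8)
The plan is to reduce the forward invariance of $\C$ to a one-dimensional comparison argument along trajectories. Fix any $x_0 \in \C$ and let $x(\cdot)$ be the unique solution of the closed-loop system \eqref{eq:clsystem1}; existence, uniqueness, and completeness on $[0,\infty)$ follow because $\mathbf{k}$ is Lipschitz (so $f_{\rm cl}$ is locally Lipschitz) and $f_{\rm cl}$ is assumed forward complete. I would then define the scalar signal $v(t) \triangleq h(x(t))$. Since $h$ is continuously differentiable and $x(\cdot)$ is $C^1$, the map $v$ is $C^1$ with $\dot v(t) = \frac{\partial h}{\partial x}(x(t))\, \dot x(t) = \dot h(x(t), \mathbf{k}(x(t)))$. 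By the hypothesis \eqref{eq:cbf_def}, whenever $x(t) \in \C$ — equivalently whenever $v(t) \geq 0$ — we obtain the differential inequality $\dot v(t) \geq -\alpha(v(t))$.

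Next I would compare $v$ against the solution $y(\cdot)$ of the scalar initial value problem $\dot y = -\alpha(y)$, $y(0) = v(0) = h(x_0) \geq 0$. The key observation is that $\alpha(0) = 0$, so $y \equiv 0$ is an equilibrium of this scalar flow; since $\alpha$ is strictly increasing, any solution initialized at $y(0) \geq 0$ remains nonnegative, giving $y(t) \geq 0$ for all $t \geq 0$. Applying the comparison principle to $\dot v \geq -\alpha(v)$ then yields $v(t) \geq y(t) \geq 0$, i.e., $h(x(t)) \geq 0$ and $x(t) \in \C$ for all $t \geq 0$, which is exactly forward invariance of $\C$ and hence safety.

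Because the hypothesis \eqref{eq:cbf_def} is only assumed on $\C$, I would make the final step rigorous with a bootstrapping argument rather than a single global application of the comparison lemma. Let $T \triangleq \sup\{ t \geq 0 : v(\tau) \geq 0 \text{ for all } \tau \in [0,t] \}$; the differential inequality is valid on $[0,T)$, so the comparison gives $v(\tau) \geq y(\tau) \geq 0$ there, and continuity together with $v(T) \geq y(T) \geq 0$ contradicts the assumption $T < \infty$, forcing $T = \infty$.

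The main obstacle is the regularity of $\alpha$. The classical comparison lemma requires the right-hand side of the comparison ODE to be locally Lipschitz in order to guarantee a unique, nonnegative-invariant solution, whereas an extended class-$\mathcal{K}_\infty$ function is assumed only continuous and strictly monotone. I would resolve this either by working with the maximal solution of $\dot y = -\alpha(y)$ under a comparison principle stated for merely continuous right-hand sides, or — more directly — by a Nagumo-type contradiction at the first exit time $t^*$ with $v(t^*) = 0$: there $\dot v(t^*) \geq -\alpha(0) = 0$, and the strict monotonicity of $\alpha$ prevents $v$ from descending into negative values. Note that the condition $\frac{\partial h}{\partial x} \neq 0$ on $\partial \C$ from Definition~\ref{def:cbf}, while needed for well-posedness of the CBF and feasibility of the associated controller, is not itself required for this invariance argument.
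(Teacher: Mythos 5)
Your reduction to a scalar differential inequality is the natural strategy, and the interior part of the comparison argument is sound, but the proof breaks precisely at the boundary case, and your closing remark makes the break irreparable. First, the bootstrap does not yield a contradiction when $v(T)=0$: the definition of $T$ as a supremum is perfectly consistent with $v$ dipping negative immediately after $T$, so continuity only rules out the case $v(T)>0$. Second, the fallback ``Nagumo-type contradiction'' is not a valid argument: $\dot v(t^*)\geq -\alpha(0)=0$ at a first exit time $t^*$ does not prevent descent, since a function such as $v(t)=-(t-t^*)^3$ has $\dot v(t^*)=0$ yet is negative for every $t>t^*$; and because hypothesis \eqref{eq:cbf_def} gives you nothing once $v<0$, no comparison can be run past $t^*$. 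Third---and this is the substantive error---the condition $\frac{\partial h}{\partial x}\neq 0$ on $\partial\C$ is \emph{not} dispensable: without it the statement is false, so any proof that never uses it must be flawed. Concretely, take $n=1$, $h(x)=-x^3$ (so $\C=(-\infty,0]$, $\partial\C=\{0\}$, $h'(0)=0$), closed-loop field $f_{\rm cl}\equiv 1$, and $\alpha(r)=3\sgn(r)|r|^{2/3}$, which is a legitimate extended class-$\mathcal{K}_\infty$ function. Then for every $x\in\C$ one has $\dot h(x)=-3x^2=-\alpha(h(x))$, so \eqref{eq:cbf_def} holds (with equality) on all of $\C$, yet the trajectory from $x_0=0\in\C$ is $x(t)=t$, which exits $\C$ instantly. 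The regularity condition is exactly what excludes this degeneracy, so your boundary argument must invoke it.

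The repair is the route the paper itself takes when proving its Theorem~\ref{theo:main} (Theorem~\ref{teo:cbfdef} is quoted from \cite{ames2017control} without a proof in this paper): invoke Nagumo's theorem rather than a pointwise derivative-sign contradiction. At any $x\in\partial\C$, condition \eqref{eq:cbf_def} gives $\frac{\partial h}{\partial x}(x)\,f_{\rm cl}(x)=\dot h(x,\mathbf{k}(x))\geq-\alpha(0)=0$, and since $\frac{\partial h}{\partial x}(x)\neq 0$ (from the CBF definition), the tangent cone to $\C$ at $x$ is the half-space $\{w:\frac{\partial h}{\partial x}(x)\,w\geq 0\}$; hence $f_{\rm cl}(x)$ is sub-tangential to $\C$ at every boundary point, and since $f_{\rm cl}$ is locally Lipschitz (unique solutions) and $\C$ is closed, Nagumo's theorem yields forward invariance of $\C$ for \eqref{eq:clsystem1}. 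Alternatively, your comparison argument closes if the inequality is assumed on a neighborhood of $\C$ (so the differential inequality survives across the boundary); in that case note also that to lower-bound $v$ from $\dot v\geq-\alpha(v)$ you should compare against the \emph{minimal} solution of $\dot y=-\alpha(y)$, not the maximal one, since $\alpha$ is merely continuous.
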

Given a baseline (possibly unsafe) locally Lipschitz continuous \textit{nominal controller} ${\mathbf{k_d} \!:\! X \!\to\! U}$, and a CBF ${h}$ with a corresponding ${\alpha}$ for system \eqref{system}, safety can be ensured by solving the CBF-Quadratic Program (CBF-QP) \cite{ames2017control}:   
\begin{align*}
%{\label{CBF-QP}}
\begin{array}{l}
{\mathbf{k^*}(x)= \ }
\displaystyle  \argmin_{{u}  \in U} \ \ \ {\|u-\mathbf{k_d}(x) \|^2}  \\ [1mm]
~~~~~~~~~~~~\textrm{s.t.} ~~~~~~~~~\dot{h}(x, u)  \geq - \alpha (h(x)),
\end{array}
\end{align*}
which enforces ${\mathbf{k^*} \!:\! X \!\to\! U}$ to take values in ${K_\text{CBF}(x)}$; thus, CBF-QP is also called a {\em safety filter}. 
If ${\mathbf{k^*}(x) \!\in\! K_\text{CBF}(x) }$ for all ${x \!\in\! X}$, then the set ${\C}$ is asymptotically stable for the forward complete closed-loop system ${f_{\rm cl}}$ in ${X}$ \cite{ames2017control}. 
%Thus, the safety filter ensures that the closed-loop system asymptotically approaches ${\C}$ even if the initial states do not satisfy the CBF condition in CBF-QP, i.e., ${x_0 \notin \C}$ \cite{ames2017control}.

%%%%%%%%%%%%%%%%%%%%%%%%%%%%%%%%%%%%%%%%%%%%%%%%%%%%
%%%%%%%%%%%%%%%%%%%%%%%%%%%%%%%%%%%%%%%%%%%%%%%%%%%%
\section{Main Result} 
\label{sec:theory}
%%%%%%%%%%%%%%%%%%%%%%%%%%%%%%%%%%%%%%%%%%%%%%%%%%%%
%%%%%%%%%%%%%%%%%%%%%%%%%%%%%%%%%%%%%%%%%%%%%%%%%%%%
This section first defines tPSSf to consider model uncertainty in the time derivative of a time-varying CBF. Then, we introduce DA-CBFs. 

In practice, control systems face uncertainties and disturbances that cannot be fully modeled. Thus, we consider a disturbed nonlinear control affine system:
\begin{equation}
\label{eq:systemdist}
    \dot{x}  = f(x) + g(x) u + d(t),
\end{equation}
where ${d\!:\! \mathbb{R}^+_0 \!\to\! \mathbb{R}^n}$ is the disturbance that can alter the safety property endowed by the CBF for system \eqref{system}. 

For the sake of generality, we consider a time-varying continuously differentiable function ${h \!:\! \mathbb{R}^+_0 \!\times\! X \!\to\! \mathbb{R} }$, and its 0-superlevel set given by
\begin{equation}
    \label{eq:tCBF}
    \C(t) \triangleq \big\{  x \!\in\!  X : h(t, x)  \geq 0 \big\},
\end{equation}
with ${\partial \C(t) \!\triangleq\! \{ {x \!\in\!  X} \!:\! h(t,x) \!=\! 0 \}}$.

\subsection{Time-Varying Projection-to-State Safety}
\label{subsec:tPSSf}
We assume that the effect of the disturbance $d$ on the derivative of CBF $h$, termed a \textit{projected disturbance}, is bounded:  
\begin{equation}
    \label{eq:distBound}
    \delta(t,x) \!\triangleq\! \dfrac{\partial h(t,x)}{\partial x} d(t); \quad |\delta(t, x(t))| \!\leq\! \bar{\delta}(t) ,
\end{equation}
where $\bar{\delta} \!:\! \mathbb{R}^+_0 \!\to\! \mathbb{R}^+_0$. Using this upper bound, we consider a time-varying set $\C_{\delta}(t)$ such that for all ${t \!\geq\! 0}$, $\C_{\delta}(t) \!\subset\! \C(t)$: 
\begin{equation}
    \label{eq:subset}
    \C_{\delta}(t) \triangleq \big\{  x \!\in\!  X :  \Bar{h}(t,x) \triangleq h(t, x) - \bar{\delta}(t)  \geq 0 \big\}.
\end{equation}
This leads to the following: 

\begin{definition}[Time-Varying Projection-to-State Safety]
\label{def:tpssf}
Given a state feedback controller ${\mathbf{k}\!:\! X \!\to\! U}$, the closed-loop system with the disturbance input ${\dot{x} \!=\! {f}(x) \!+\! {g}(x) \mathbf{k}(x) \!+\! d(t) }$ is \textit{time-varying projection-to-state safe} (tPSSf) on $\C_{\delta}(t)$ with respect to the function ${\Bar{h}\!:\! \mathbb{R}^+_0 \!\times\! X \!\to\! \mathbb{R} }$ and bounded projected disturbance ${\delta}$ if there exists ${\bar{\delta}(t)}$ such that ${\C(t) \!\supset\! \C_{\delta}(t)}$ is forward invariant for all ${t \!\geq\! 0}$.
\end{definition}

\begin{remark}
\label{re:tPSSF}
PSSf, proposed in \cite{taylor2020}, characterizes safety in the presence of a disturbance or model uncertainty using a time-invariant bound ${|\delta|_\infty \!\triangleq\! \esssup_{t \geq 0}{|\delta(t, x(t)| \!\leq\! \bar{\delta}}}$ in \eqref{eq:subset}. Moreover, PSSf defines a larger forward invariant set, given by ${\C_{\delta_\infty} \!\triangleq\! \{ x  \!\in\!  X \!:\! h(x) \!+\! |\delta|_\infty  \!\geq\! 0 \}}$, ${\C \!\subset\! \C_{\delta_\infty}}$, with a time-invariant function $h$. Thus, the system can leave the safe set ${\C}$ while remaining within the larger set ${\C_{\delta_\infty}}$. On the other hand, Definition~\ref{def:tpssf} utilizes the time-varying bound ${\bar{\delta}(t)}$ to consider the projected disturbance, and defines a smaller time-dependent forward invariant set $\C_{\delta}(t)$ to guarantee that the system stays in the original set $\C(t)$. Note that disturbance observer-based robust CBF methods \cite{dacs2022robust} utilize a time-varying bound, which is provided by the disturbance observer, with a corresponding subset definition similar to \eqref{eq:distBound} and \eqref{eq:subset}.  
\end{remark}

Next, given the set $\C_{\delta}(t)$, using Definition~\ref{def:tpssf}, the following theorem ensures the forward invariance of the original set $\C(t)$ in the presence of a disturbance:
\begin{theorem}
\label{theo:main}
Let ${\C_{\delta}(t) }$ given in \eqref{eq:subset} be the $0$-superlevel set of a continuously differentiable function ${\Bar{h}\!:\! \mathbb{R}^+_0 \!\times\! X \!\to\! \mathbb{R}}$ with $0$ as a regular value. Any locally Lipschitz continuous controller ${\mathbf{k}\!:\! X \!\to\! U}$ satisfying
\begin{equation}
\label{eq:mainth}
L_f \Bar{h} (t,x) + L_g \Bar{h} (t,x) \mathbf{{k}}(x) + \dfrac{\partial \Bar{h}(t,x)}{\partial t} \!\!\geq\! -\alpha (\Bar{h}(t,x)\! ),
\end{equation}
for all ${x(t) \!\in\! \C_{\delta}(t) }$ renders the disturbed system \eqref{eq:systemdist} tPSSf on ${\C_{\delta}(t)}$ with respect to the projected disturbance ${{\delta} \!:\! \mathbb{R}^+_0 \!\times\! X \!\to\! \mathbb{R}}$ if there exists a time-varying function ${\bar{\delta} \!:\! \mathbb{R}^+_0 \!\to\! \mathbb{R}^+_0}$ satisfying ${|\delta(t, x(t))| \!\leq\! \bar{\delta}(t)}$ and ${\alpha \!\in\! \mathcal{K}_{\infty, e}}$ such that for all ${t \!\geq\! 0}$:
\begin{equation}
\label{eq:classp}
-\Dot{\bar{\delta}}(t) + \bar{\delta}(t) \!\leq\! -\alpha (-\bar{\delta}(t)).
\end{equation}
\end{theorem}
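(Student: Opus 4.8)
The plan is to reduce the tPSSf claim, via Definition~\ref{def:tpssf}, to forward invariance of the original set $\C(t)$ under the disturbed closed loop \eqref{eq:systemdist}, and to establish this through a comparison argument applied to $\bar{h}$. The first step is to differentiate $\bar{h}(t,x) = h(t,x) - \bar{\delta}(t)$ along the disturbed trajectory. The crucial observation is that $\bar{h}$ and $h$ differ only by the time-dependent term $\bar{\delta}(t)$, so $\frac{\partial \bar{h}}{\partial x} = \frac{\partial h}{\partial x}$ (hence $L_f \bar{h} = L_f h$ and $L_g\bar{h} = L_g h$), while $\frac{\partial \bar{h}}{\partial t} = \frac{\partial h}{\partial t} - \dot{\bar{\delta}}$. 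Consequently $\dot{\bar{h}} = L_f\bar{h} + L_g\bar{h}\,\mathbf{k}(x) + \frac{\partial\bar{h}}{\partial t} + \frac{\partial h}{\partial x}d(t)$, and the last term is exactly the projected disturbance $\delta(t,x)$ of \eqref{eq:distBound}. Thus the entire left-hand side of \eqref{eq:mainth} reappears verbatim, augmented only by $\delta$.

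Next I would invoke the controller inequality \eqref{eq:mainth} together with the bound $\delta(t,x)\geq-\bar{\delta}(t)$ from \eqref{eq:distBound} to obtain the scalar differential inequality $\dot{\bar{h}}(t,x(t)) \geq -\alpha(\bar{h}(t,x(t))) - \bar{\delta}(t)$, holding almost everywhere along the (absolutely continuous) closed-loop solution. The idea is then to compare $\bar{h}$ against the scalar system $\dot w = -\alpha(w) - \bar{\delta}(t)$. Condition \eqref{eq:classp} is precisely what makes the curve $y(t)\triangleq-\bar{\delta}(t)$ a subsolution of this comparison system: substituting $y=-\bar{\delta}$ gives $\dot y = -\dot{\bar{\delta}}$, and the subsolution requirement $\dot y \leq -\alpha(-\bar{\delta}) - \bar{\delta}$ rearranges exactly to $-\dot{\bar{\delta}}+\bar{\delta}\leq-\alpha(-\bar{\delta})$.

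With these two relations in hand I would apply the comparison principle. Since $\alpha\in\mathcal{K}_{\infty,e}$ is strictly increasing, $w\mapsto-\alpha(w)-\bar{\delta}(t)$ is decreasing in $w$, so at any touching time $t_1$ where $\bar{h}(t_1)=y(t_1)$ one has $\dot{\bar{h}}(t_1)\geq-\alpha(y(t_1))-\bar{\delta}(t_1)\geq\dot y(t_1)$, preventing $\bar{h}$ from falling below $y$. For any initial state $x_0\in\C(0)$ we have $\bar{h}(0,x_0)=h(0,x_0)-\bar{\delta}(0)\geq-\bar{\delta}(0)=y(0)$, so $\bar{h}(t,x(t))\geq-\bar{\delta}(t)$ for all $t\geq0$, equivalently $h(t,x(t))\geq0$, i.e.\ $x(t)\in\C(t)$. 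This gives forward invariance of $\C(t)\supset\C_{\delta}(t)$ and hence tPSSf in the sense of Definition~\ref{def:tpssf}.

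The step I expect to require the most care is the comparison argument itself, on two fronts: justifying it rigorously in the time-varying, merely Lipschitz (hence Carath\'eodory / almost-everywhere) setting, and reconciling the region on which \eqref{eq:mainth} is assumed with where the argument invokes it. Because the touching condition $\bar{h}=-\bar{\delta}$ lives on $\partial\C(t)$ rather than on $\partial\C_{\delta}(t)$, one must ensure the inequality \eqref{eq:mainth} is available there --- either by enforcing it along the realized trajectory (as the QP safety filter does pointwise) or on all of $\C(t)$. An equivalent and more transparent route is a Nagumo-type boundary check: evaluating $\dot h$ where $h=0$ (so $\bar{h}=-\bar{\delta}$) and combining \eqref{eq:mainth} with \eqref{eq:classp} yields $\dot h \geq -\alpha(-\bar{\delta}) + \dot{\bar{\delta}} - \bar{\delta}\geq0$, confirming that $h$ cannot cross zero; the regular-value hypothesis on $\bar{h}$ guarantees the boundary is well behaved for this argument.
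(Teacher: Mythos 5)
Your proposal is correct and follows essentially the same route as the paper: the paper differentiates $h=\Bar{h}+\bar{\delta}$ along the disturbed dynamics, lower-bounds $\dot h$ using \eqref{eq:mainth}, the bound $|\delta|\le\bar{\delta}$, and \eqref{eq:classp}, and concludes $\dot h\ge 0$ on $\partial\C(t)$ via Nagumo's theorem --- exactly your ``equivalent and more transparent route,'' of which your comparison/subsolution argument (touching occurs precisely where $h=0$) is just a repackaging. Your side remark about the domain mismatch is well taken and applies to the paper's own proof too: it invokes \eqref{eq:mainth} at points of $\partial\C(t)$, which lie outside $\C_{\delta}(t)$ whenever $\bar{\delta}(t)>0$, so strictly speaking the inequality must be assumed on all of $\C(t)$ (or along the realized closed-loop trajectory) rather than only on $\C_{\delta}(t)$.
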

\begin{proof}
\label{pro:main1}
Our goal is to show that the set ${\C(t)}$ is forward invariant. From \eqref{eq:subset}, \eqref{eq:mainth}, \eqref{eq:classp} and the time derivative of ${h\!=\!\Bar{h}\!+\!\Bar{\delta}}$ along the disturbed system \eqref{eq:systemdist} we have:
\begin{align}
\begin{split}
\label{eq:mainpr}
\!\!\Dot{{h}} &\!=\! L_f \Bar{h} (t,\!x) \!+\! L_g \Bar{h} (t,\!x) \mathbf{{k}}(x) \!+\! \dfrac{\partial \Bar{h}(t,\!x)}{\partial t} \!+\! \Dot{\bar{\delta}}(t) \!+\! \delta(t,\!x) \\
&\geq L_f \Bar{h} (t,\!x) \!+\! L_g \Bar{h} (t,x) \mathbf{{k}}(x) \!+\! \dfrac{\partial \Bar{h}(t,x)}{\partial t} \!+\! \Dot{\bar{\delta}}(t)  \!-\! \bar{\delta}(t) \\
&\geq -\alpha (\Bar{h}(t,x)) \!+\! \Dot{\bar{\delta}}(t) - \bar{\delta}(t) \\
&\geq -\alpha (\Bar{h}(t,x)) + \alpha (-\bar{\delta}(t)) \\
&= - \left ( \alpha ({h}(t,x) - \bar{\delta}(t)) + \alpha (-\bar{\delta}(t)) \right ). 
\end{split}
\end{align}
Next, we consider a choice state such that ${ x(t) \!\in\! \partial \C(t)}$, i.e., ${{h}(t,x)\!=\! 0}$, for which \eqref{eq:mainpr} implies ${\Dot{{h}} \geq 0}$. And we have ${\frac{\partial h(t,x)}{\partial x} \neq 0}$ for all ${x(t) \!\in\! \partial {\C(t)}}$ from $0$ as a regular value assumption. 
Therefore, Nagumo's theorem \cite{blanchini2008set} guarantees that ${ h(0, x(0)) \geq 0 \implies h(t, x(t)) \geq 0, \forall t \geq 0}$.
\end{proof}
%\begin{remark}
%    \label{re:main3}
%\end{remark}

\vskip -0.1 true in
\subsection{Safety with Differentiator-based CBF{s}} 
\label{subsec:theory}
When noisy parameter measurements impact safety constraints, a differentiator can estimate necessary time derivatives for CBF conditions, such as the ZMP constraints, that depend on noisy acceleration (gravity) measurements.

Let ${p_0(t)}$ with ${p_0\!:\! \mathbb{R}^+_0 \!\to\! \mathbb{R}}$ be a continuously differentiable function with a globally Lipschitz continuous time derivative. A measurable noisy signal ${p\!:\! \mathbb{R}^+_0 \!\to\! \mathbb{R}}$ can be written as ${p(t) \!=\! p_0(t) \!+\! v(t)}$, where $v$ is a bounded signal: ${\|v(t)\|_{\infty} \!\triangleq\! \sup_{t} {\|v(t)\|} \!<\! \infty}$, denoted by ${v \!\in\! L_{\infty}}$.

The main goal of a differentiator is to estimate ${\dot{p}_0(t)}$ for all ${t \!\geq\! 0}$ by taking ${p(t)}$ as an input. The dynamics ${\dot{p}_0(t)}$ are a single-input single-output system in \textit{strict feedback form}:
\begin{align}
    \label{eq:diffform}
    \dot{\mu}_1 =  {\mu}_2 ; \quad
    \dot{\mu}_2 = \Ddot{p}_0(t) ; \quad
    p(t) = {\mu}_1 + v(t),
\end{align}
where ${{\mu} \!\triangleq\! \left [ {\mu}_1~{\mu}_2  \right ]^\top = \left [p_0~ \Dot{p}_0 \right ]^\top \!\in\! \mathbb{R}^2}$ is the state, ${\Ddot{p}_0}$ is the unknown input, and ${ \hat{\mu} \!\triangleq\! \left [ \hat{\mu}_1 ~ \hat{\mu}_2 \right ]^\top \!\in\! \mathbb{R}^2}$ will be the estimation output of a differentiator.

A variety of approaches to real-time differentiation problems are proposed in the literature. For instance, discontinuous signal differentiation algorithms \cite{seeber2021robust}, and high-gain observers \cite{khalil2014high}, \cite{astolfi2018low}. In particular, we consider a class of differentiators that are ISS with respect to perturbations such as noise input:
\begin{definition}[Input-to-state Stable Differentiator] 
\label{def:issbound}
Consider a continuous-time differentiator for system \eqref{eq:diffform} of the form
\begin{equation}
\label{eq:system_dist}
    \dot{\hat{\mu}}  = F \big(\hat{\mu} , p(t) \big),
\end{equation}
where ${F\!:\! \mathbb{R}^2 \!\times\! \mathbb{R}  \!\to\! \mathbb{R}^2}$ is locally Lipschitz in its arguments. 
The differentiator \eqref{eq:system_dist} is an \textit{input-to-state stable (ISS) differentiator} if there exist a ${ \beta \!\in\! \mathcal{KL}}$ and a ${\gamma \!\in\! \mathcal{K}}$ such that for any input ${v \!\in\! L_{\infty}}$ and any initial differentiation error ${\hat{\mu}(0) \!-\! {\mu}(0) }$, the solution of \eqref{eq:system_dist} satisfies for all $t  \!\geq\! 0$:
\begin{equation}
\label{eq:ISS}
     \| \!\underbrace{\hat{\mu}(t) \!-\! {\mu}(t)}_{\triangleq e_{\mu}(t)}  \!\| \!\leq\! \underbrace{\beta(\|{\hat{\mu}}(0) \!-\! {\mu}(0) \|, t) \!+\! \gamma(\|v(t)\|_{\infty})}_{\triangleq \mathcal{M}(t)} ,
\end{equation}
where ${e_{\mu} }$ is the \textit{differentiation error}, and ${\mathcal{M}(t) \geq 0, \forall t \!\geq\! 0}$.
\end{definition}
Definition~\ref{def:issbound} characterizes the performance of differentiator \eqref{eq:system_dist}
in terms of the boundness of the estimation error. For example, differentiation with high-gain observers is ISS \cite{astolfi2018low}. Furthermore, due to the continuity requirement of CBF conditions, high-gain observers are an appropriate differentiator for the rollover prevention problem. A high-gain observer for the class of systems \eqref{eq:diffform} is given by
\begin{equation}
\label{eq:HGO}
    \dot{\hat{\mu}}_1 = {\hat{\mu}}_2 + k_{1} \ell (p(t) - {\hat{\mu}}_1) ; \quad
    \dot{\hat{\mu}}_2 = k_{2} \ell^{2} (p(t) - {\hat{\mu}}_1) ,
\end{equation}
where ${\ell \!\geq\! 0}$ is the high-gain parameter, and ${k_{1}, k_{2}  \!\geq\! 0}$ are the design coefficients.  
The estimation error provided by the observer \eqref{eq:HGO} satisfies the following bound for all $t \!\geq\! 0$:
\begin{equation}
\big\|\hat{\mu}(t) - {\mu}(t)\big\| \leq c_{1}   {\rm e}^{-c_{2}  t} \big\|\hat{\mu}(0) - {\mu}(0) \big\| + c_{3} \|v\|_{\infty}
\end{equation}
for some ${c_1, c_2, c_3  \!\geq\! 0}$; see \cite{khalil2014high}, \cite{astolfi2018low}. 

In our problem setup, we consider safety constraints that rely on multiple time-varying parameters denoted by ${\mathrm{p}_0(t) \!\triangleq\!  [p_{0,1}(t) \ldots p_{0,z}(t) ]^\top}$, ${\mathrm{p}_0 \!:\! \mathbb{R}^+_0 \!\to\! \mathbb{R}^{z}}$, where ${z}$ is the number of parameters needing differentiation, as ${h(x, \mathrm{p}_0(t)) }$, ${h \!:X \!\times\! \mathbb{R}^z \!\to\! \mathbb{R}}$. These parameters are associated with a noisy measurement vector ${\mathrm{p}(t) \!\triangleq\!  [p_1(t) \ldots p_z(t) ]^\top}$, ${\mathrm{p}  \!:\! \mathbb{R}^+_0 \!\to\! \mathbb{R}^{z}}$. We define a new state vector ${x_{\mu} \!\triangleq\!  [ {\mu}_{1,1} \!~ {\mu}_{2,1} \ldots {\mu}_{1,z} \!~ {\mu}_{2,z}  ]^\top \!\in\! \mathbb{R}^{2z}}$, where ${{\mu}_{1,i}\!:\! \mathbb{R}^+_0 \!\to\! \mathbb{R}}$ is a continuously differentiable function, and ${{\mu}_{2,i}\!:\! \mathbb{R}^+_0 \!\to\! \mathbb{R}}$ represents its globally Lipschitz continuous derivative as in \eqref{eq:diffform} for ${i \!=\! 1, \ldots, z}$. And, ${ \hat{x}_{\mu} \!\triangleq\!  [ \hat{\mu}_{1,1} ~ \hat{\mu}_{2,1} \ldots \hat{\mu}_{1,z} ~ \hat{\mu}_{2,z}]^\top \!\in\! \mathbb{R}^{2z}}$ is the estimation output vector of a differentiator. We assume the parameters are differentiated separately using the same ISS differentiator structure. Therefore, we have a multi-input multi-output differentiator dynamics: ${\mathrm{F} \!\triangleq\! \big [F ( \hat{\mu}_{1,i} , \hat{\mu}_{2,i} , p_i(t) ) \big ]}$, ${\mathrm{F} \!:\!\mathbb{R}^{2z} \!\times\! \mathbb{R}^{z}  \!\to\! \mathbb{R}^{2z}}$. 

As the upper bound function ${\mathcal{M}(t)}$ in \eqref{eq:ISS} is valid for a single parameter, but we have multiple differentiated parameters, we need to obtain the maximum of ${\mathcal{M}_i(t)}$, representing ${\mathcal{M}(t)}$ for ${i \!=\! 1,\ldots,z}$, at each time step. To construct a smooth function representing the maximum of ${\mathcal{M}_i}$, we can employ a smooth maximum given by (with  ${\lambda \!\geq\! 0}$):
\begin{equation}
\label{eq:softmax}
\mathrm{M}(t) = \lambda \log \Big(\sum_{i=1}^{z} {\rm e}^{\lambda \mathcal{M}_i(t)} \Big) .
\end{equation}

Now, we define a disturbed augmented system dynamics formed by \eqref{eq:systemdist} and \eqref{eq:system_dist} as
\begin{equation}
\label{eq:sys_aug}
\underbrace{\begin{bmatrix}
\dot x \\ 
\dot{\hat{x}}_{\mu}
\end{bmatrix}}_{\triangleq \Dot{\tilde{x}}}
=
\underbrace{\begin{bmatrix}
f(x) \\ \mathrm{F}(\hat{x}_{\mu} , \mathrm{p}(t))
\end{bmatrix}}_{\triangleq \tilde{f}(\tilde{x}, \mathrm{p}(t))}  +
\underbrace{\begin{bmatrix}
g(x) \\ 
0 
\end{bmatrix}}_{\triangleq \tilde{g}(\tilde{x})} u +
\underbrace{\begin{bmatrix}
d(t) \\ 
0 
\end{bmatrix}}_{\triangleq \tilde{d}(t)}.
\end{equation}
Next, we incorporate the ISS differentiator \eqref{eq:system_dist} into the CBF construction with the augmentation of the existing CBF $h$ by replacing ${\mathrm{p}_0(t)}$ in ${h(x, \mathrm{p}_0(t)) }$ with ${\hat{x}_{\mu}}$. By the Lipschitz continuity of ${h}$, there exists a constant ${L_h \!\geq\! 0}$ that satisfies:
\begin{align}
\begin{split}
    \label{eq:lipsh}
    \!\!\!\!\left | h(x,{\hat{x}}_{\mu}) \!-\! h(x,x_{\mu})   \right | &\!\leq\! L_h \left \| {\hat{x}}_{\mu} 
 \!-\! x_{\mu} \right \| \\
    \implies h(x,x_{\mu}) &\!\geq\! h(x,{\hat{x}}_{\mu}) \!-\! L_h \left \| {\hat{x}}_{\mu}  \!-\! x_{\mu} \right \| \\
    &\!\geq\! h(x,{\hat{x}}_{\mu}) \!-\! L_h \mathrm{M}(t) \!\triangleq\! h_{\mathrm{M}} (t, \tilde{x}),
\end{split}
\end{align}
for any ${(t,\! x,\!x_{\mu},\!{\hat{x}}_{\mu}) \!\in\! \mathbb{R}^{+}_0 \!\times\! X \!\times\! \mathbb{R}^{2z}\!\times\! \mathbb{R}^{2z} }$. 
\begin{remark}
\label{re:lincon}
If the CBF ${h(x, \mathrm{p}_0(t)) }$ is affine in parameter $\mathrm{p}_0$, i.e., ${h(x,\mathrm{p}_0(t))) \!=\! h(x) \!+\! q^\top \mathrm{p}_0(t)),~q \!\in\! \mathbb{R}^{z}}$, then ${L_h \!=\! q}$ is a Lipschitz constant. 
\end{remark}

Similar to the observer-based CBF method proposed in \cite{agrawal2022safe}, we consider $h_{\mathrm{M}}$ and its $0$-superlevel set to enhance robustness against differentiation errors $e_{\mu}$:
\begin{equation}
\label{hecbf}
    \C_{\mathrm{M}}(t) \!\triangleq\! \{ \tilde{x}  \!\in\! X \!\times\! \mathbb{R}^{2z}  : h_{\mathrm{M}}(t,\! \tilde{x}) \!\geq\! 0 \} ,
\end{equation}
which is a time-varying set. Since ${\mathrm{M}(t) \!\geq\! 0,~\forall t \!\geq\! 0}$, ${ \C_{\mathrm{M}}(t)}$ is a subset of the $0$-superlevel set of ${h(x, \mathrm{p}_0(t)) }$, original safe set. We assume that ${\frac{\partial h_{\mathrm{M}}(t,\! \tilde{x})}{\partial \tilde{x}} \neq 0}$ for all ${\tilde{x}(t) \!\in\! \partial {\C_{\mathrm{M}}}(t)}$. Finally, the following definition incorporates the dynamics of the differentiator into a CBF constraint: 
\begin{definition}[Differentiator-Adaptive CBFs]
\label{def:dCBFs}
Let ${\C_{\mathrm{M}}(t) }$ given in \eqref{hecbf} with an ISS differentiator \eqref{eq:system_dist} be the $0$-superlevel set of a continuously differentiable function ${{h}_{\mathrm{M}}\!:\! \mathbb{R}^+_0 \!\times\! X \!\to\! \mathbb{R}}$ with $0$ as a regular value. Then ${{h}_{\mathrm{M}}}$ is a \textit{differentiator-adaptive control barrier function} (DA-CBF) for system \eqref{eq:sys_aug}, without $d$, on ${\C_{\mathrm{M}}(t)}$, if there exists an ${\alpha \!\in\! \mathcal{K}_{\infty, e}}$ such that ${\forall  \tilde{x}(t)  \!\in\! \C_{\mathrm{M}}(t) }$:
\begin{equation*}
%\label{def:DAcbf}
   \sup_{u \in U} \!\big [ \!L_{\tilde{f}} h_{\mathrm{M}} (t,\!\tilde{x}) + L_{\tilde{g}} h_{\mathrm{M}} (t,\!\tilde{x})  u 
 \!-\! L_h \Dot{\mathrm{M}}(t)  \!\big ] \! \!\geq\!  -\alpha (h_{\mathrm{M}}(t,\tilde{x})).
\end{equation*}
\end{definition}

Next, we ensure robust safety for the disturbed augmented system \eqref{eq:sys_aug} via the following theorem by leveraging the notions of DA-CBF and tPSSf.
\begin{theorem}
\label{theo:main2}
Let ${h_{\mathrm{M}} \!:\! \mathbb{R}^{+}_0 \!\times\! X \!\times\! \mathbb{R}^{2z} \!\to\! \mathbb{R}}$ be a DA-CBF for \eqref{eq:sys_aug}, without the disturbance $d$, on its $0$-superlevel set ${\C_{\mathrm{M}}(t)}$ with an ${\alpha \!\in\! \mathcal{K}_{\infty, e}}$. 
Any locally Lipschitz continuous controller ${\mathbf{k}\!:\! X \!\times\! \mathbb{R}^{2z} \!\to\! U}$ satisfying ${\forall \tilde{x}(t)  \!\in\! C_{\mathrm{M}}(t) }$:
\begin{equation}
\label{def:DAcbf2}
   \!\!\!L_{\tilde{f}} {h}_{\mathrm{M}} (t,\!\tilde{x}) \!+\! L_{\tilde{g}} {h}_{\mathrm{M}} (t,\!\tilde{x})  \mathbf{{k}}(\tilde{x})  \!-\! L_h \Dot{\mathrm{M}}(t)   \!\!\geq\!  -\alpha ({h}_{\mathrm{M}}(t,\tilde{x})),
\end{equation}
renders the set $\C_{\mathrm{M}}(t)$ tPSSf for \eqref{eq:sys_aug} with respect to the projected disturbance ${\delta(t,\tilde{x}) \!=\! \frac{\partial h_{\mathrm{M}}(t,\tilde{x})}{\partial \tilde{x}} \tilde{d}(t)}$ if for all ${t \!\geq\! 0}$: $${-L_h \Dot{\mathrm{M}}(t) \!+\! \bar{\delta}(t) \!\leq\! -\alpha (-L_h \mathrm{M}(t)).}$$ 
\end{theorem}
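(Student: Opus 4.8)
The plan is to mirror the proof of Theorem~\ref{theo:main}, now carried out on the augmented system \eqref{eq:sys_aug} with $h_{\mathrm{M}}$ playing the role of the tightened barrier. The one structural observation that makes everything line up is this: since the only explicit time dependence in $h_{\mathrm{M}}(t,\tilde{x}) = h(x,\hat{x}_{\mu}) - L_h \mathrm{M}(t)$ lives in the subtracted term $-L_h \mathrm{M}(t)$, the Lie derivatives $L_{\tilde{f}} h_{\mathrm{M}}$ and $L_{\tilde{g}} h_{\mathrm{M}}$ agree with those of $H(t,\tilde{x}) \triangleq h(x,\hat{x}_{\mu})$, while $\frac{\partial h_{\mathrm{M}}}{\partial t} = -L_h \dot{\mathrm{M}}(t)$. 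Consequently, the DA-CBF inequality \eqref{def:DAcbf2} is exactly the time-varying CBF condition \eqref{eq:mainth} of Theorem~\ref{theo:main}, written for the augmented vector fields $\tilde{f}, \tilde{g}$ with barrier $h_{\mathrm{M}}$. Moreover, because $\tilde{d}$ has a zero block in its $\hat{x}_{\mu}$-coordinates (see \eqref{eq:sys_aug}) and $L_h \mathrm{M}(t)$ is state-independent, the projected disturbance collapses to $\delta(t,\tilde{x}) = \frac{\partial h_{\mathrm{M}}}{\partial \tilde{x}} \tilde{d}(t) = \frac{\partial h}{\partial x} d(t)$, whose magnitude is controlled by the hypothesis $|\delta| \leq \bar{\delta}$.

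Next I would differentiate $H$ along the disturbed augmented flow. Since $H$ carries no explicit time dependence, $\dot{H} = L_{\tilde{f}} h_{\mathrm{M}} + L_{\tilde{g}} h_{\mathrm{M}} \mathbf{k}(\tilde{x}) + \delta$. Rearranging \eqref{def:DAcbf2} as $L_{\tilde{f}} h_{\mathrm{M}} + L_{\tilde{g}} h_{\mathrm{M}} \mathbf{k} \geq -\alpha(h_{\mathrm{M}}) + L_h \dot{\mathrm{M}}$ and then applying $\delta \geq -\bar{\delta}$ gives $\dot{H} \geq -\alpha(h_{\mathrm{M}}) + L_h \dot{\mathrm{M}} - \bar{\delta}$. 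This is the direct analogue of the inequality chain \eqref{eq:mainpr}, with the estimation-error margin $L_h \mathrm{M}$ here occupying the slot that $\bar{\delta}$ occupied in Theorem~\ref{theo:main}.

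The conclusion then follows by evaluating on the boundary of the set I claim is invariant, namely $\{\tilde{x} : H(t,\tilde{x}) \geq 0\}$, which contains $\C_{\mathrm{M}}(t)$ because $L_h \mathrm{M}(t) \geq 0$. At a point where $H = 0$ one has $h_{\mathrm{M}} = H - L_h \mathrm{M} = -L_h \mathrm{M}$, so the bound becomes $\dot{H} \geq -\alpha(-L_h \mathrm{M}) + L_h \dot{\mathrm{M}} - \bar{\delta}$; the stated hypothesis $-L_h \dot{\mathrm{M}} + \bar{\delta} \leq -\alpha(-L_h \mathrm{M})$ is precisely what forces this right-hand side to be nonnegative, yielding $\dot{H} \geq 0$ on $\{H = 0\}$. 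Because $\frac{\partial H}{\partial \tilde{x}} = \frac{\partial h_{\mathrm{M}}}{\partial \tilde{x}}$ is nonvanishing on this boundary by the regular-value assumption, Nagumo's theorem \cite{blanchini2008set} delivers forward invariance of $\{H \geq 0\} \supset \C_{\mathrm{M}}(t)$, which is the tPSSf guarantee required by Definition~\ref{def:tpssf}.

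The step requiring the most care is the bookkeeping of the two distinct margins. Unlike Theorem~\ref{theo:main}, where a single $\bar{\delta}$ simultaneously tightens the barrier and bounds the disturbance, here these are different functions: the differentiation-error margin $L_h \mathrm{M}(t)$ is baked into $h_{\mathrm{M}}$ and drives the $-\alpha(-L_h \mathrm{M})$ comparison argument, whereas the disturbance margin $\bar{\delta}(t)$ enters only additively through $|\delta| \leq \bar{\delta}$. Getting $L_h \mathrm{M}$ inside $\alpha(\cdot)$ but $\bar{\delta}$ as the separate additive term is exactly what reproduces the theorem's class-$\mathcal{K}_{\infty,e}$ condition. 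I would also confirm that $\mathrm{M}(t)$, and hence $\dot{\mathrm{M}}(t)$, is well defined and locally Lipschitz so that $\dot{H}$ exists along the flow; this is guaranteed by the ISS bound \eqref{eq:ISS} together with the smooth-maximum construction \eqref{eq:softmax}.
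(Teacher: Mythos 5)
Your proof is correct and is essentially the paper's own argument: the paper likewise identifies \eqref{def:DAcbf2} with the time-varying CBF condition of Theorem~\ref{theo:main} for the augmented system (using $\partial h_{\mathrm{M}}/\partial t = -L_h \Dot{\mathrm{M}}(t)$), bounds the derivative of $h(x,\hat{x}_{\mu}) = h_{\mathrm{M}} + L_h \mathrm{M}(t)$ along the disturbed dynamics via $|\delta| \leq \bar{\delta}$ and the hypothesis $-L_h \Dot{\mathrm{M}}(t) + \bar{\delta}(t) \leq -\alpha(-L_h \mathrm{M}(t))$, and concludes forward invariance of the superset $\{h(x,\hat{x}_{\mu}) \geq 0\} \supset \C_{\mathrm{M}}(t)$ by the boundary-plus-Nagumo argument, which is precisely the tPSSf property. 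Your explicit bookkeeping of the two distinct margins ($L_h \mathrm{M}$ inside $\alpha(\cdot)$ versus $\bar{\delta}$ entering additively) is exactly the ``similar argument to Theorem~\ref{theo:main}'' that the paper invokes without writing out.
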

\begin{proof}
As ${h_{\mathrm{M}}}$ is a DA-CBF for system \eqref{eq:sys_aug} on ${\C_{\mathrm{M}}(t)}$, Definition~\ref{def:dCBFs} implies that there exists an ${\alpha \!\in\! \mathcal{K}_{\infty, e}}$ such that 
\begin{equation*}
%\label{eq:the31}
\sup_{u \in U}  \!\!\bigg [ \frac{\partial h_{\mathrm{M}}(t,\!\tilde{x})}{\partial \tilde{x}} \big (\!\tilde{f}(t,\!\tilde{x}) + \tilde{g}(\tilde{x}) u \!\big ) + \frac{\partial h_{\mathrm{M}}(t,\!\tilde{x})}{\partial t} \!\bigg ] \! \!\!\geq\! -\alpha (h_{\mathrm{M}}(t,\!\tilde{x})),  
\end{equation*}
for all ${\tilde{x}(t)  \!\in\! \C_{\mathrm{M}}(t)}$.
Therefore, any Lipschitz continuous controller ${{u} \!=\! \mathbf{{k}}(\tilde{x})}$ satisfying \eqref{def:DAcbf2} renders system \eqref{eq:sys_aug}, without $d$, safe with respect to the set ${\C_{\mathrm{M}}(t)}$ based on Theorem~\ref{teo:cbfdef}. 

Following a similar argument to that in the proof of Theorem~\ref{theo:main} under the condition ${-L_h \Dot{\mathrm{M}}(t) \!+\! \bar{\delta}(t) \!\leq\! -\alpha (-L_h \mathrm{M}(t))}$, we have \eqref{def:DAcbf2} ${ \!\implies\! }$ ${\Dot{h}(t,\!\tilde{x},\!u) \!\geq\! -\alpha (h(\tilde{x})), ~\!\forall ~\!\tilde{x}(t)  \!\in\! \C(t) }$, where
\begin{equation*}
%\label{def:DAcbf}
  \Dot{h}(t,\!\tilde{x},\!u) \!=\! L_{\tilde{f}} h (t,\!\tilde{x}) + L_{\tilde{g}} h (t,\!\tilde{x})  u  + \delta(t,\!\tilde{x}), 
\end{equation*}
thus the closed-loop system \eqref{eq:sys_aug} is tPSSf on $\C_{\mathrm{M}}(t)$ with respect to the projected disturbance ${\delta}$. With this robustness, the condition \eqref{def:DAcbf2} leads to ${\tilde{x}(t) \!\in\! \C(t), \forall t \!\geq\! 0}$ if ${\tilde{x}(0) \!\in\! \C(0)}$, which also implies that ${h(x(t),p_0(t)) \!\geq\! 0,\!~\forall t \!\geq\! 0}$. 
\end{proof}
As the DA-CBF condition is affine in the control input $u$, we can define a differentiator-adaptive safety filter. Under Theorem~\eqref{theo:main2}, given a nominal locally Lipschitz continuous controller ${\mathbf{k_d} \!:\! X \!\to\! U}$, ISS differentiator $F$, DA-CBF $h_{\mathrm{M}}$, and ${\alpha \!\in\! \mathcal{K}_{\infty, e}}$ for system \eqref{eq:sys_aug}, the solution of the following QP, {DA-CBF-QP}, ensures robust safety for system \eqref{eq:sys_aug}: 
\begin{align*}
%{\labelOP{opt:robCBF0}}
\begin{array}{lll}
{\mathbf{{k}^*}(t, \tilde{x})= \ }
\displaystyle  \argmin_{{u} \in U} \ {\|{u}-\mathbf{k_d}(x)\|^2}  \\ [2mm]
~~\textrm{s.t.} \  L_{\tilde{f}} h_{\mathrm{M}} (t,\!\tilde{x}) \!+\! L_{\tilde{g}} h_{\mathrm{M}} (t,\!\tilde{x})  u \!-\! L_h \Dot{\mathrm{M}}(t)  \!\!\geq\!  -\alpha (h_{\mathrm{M}}(t,\!\tilde{x})\!). 
\end{array}
\end{align*}

Finally, if ${\alpha (h) \!=\! \alpha_c h}$ with ${\alpha_c \!>\! 0}$, i.e., it is a linear extended class-${\mathcal{K}_{\infty, e}}$ function, we have the following corollary:
\begin{corollary}
\label{theo:affine}
Let ${\alpha \!\in\! \mathcal{K}_{\infty, e}}$ in Theorem~\ref{theo:main2} be a linear class-${\mathcal{K}_{\infty, e}}$ function. If there exist an ${\alpha_c}$ such that:
\begin{equation}
\label{eq:cor1}
\alpha_c \geq 1 \quad \text{and} \quad \Dot{\mathrm{M}}(t) \!\leq\! -\alpha_c  \mathrm{M}(t),\!~\forall t \geq 0,
\end{equation}
then, a sufficient condition for \eqref{def:DAcbf2} is given by
\begin{equation}
\label{def:DAcbf3}
   L_{\tilde{f}} h (t,\tilde{x}) + L_{\tilde{g}} h (t,\tilde{x})  u - \alpha_c \Bar{\delta}(t) \geq  -\alpha_c h(t,\tilde{x}),
\end{equation}
which is independent of $L_h$ and $\mathrm{M}$, and ensures that the disturbed augmented system \eqref{eq:sys_aug} is tPSSf on ${\C_{\mathrm{M}}(t)}$.
\end{corollary}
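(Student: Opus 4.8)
The plan is to exploit the linearity $\alpha(s)=\alpha_c s$ so that the single inequality \eqref{def:DAcbf3}, which contains only $h(t,\tilde{x})$ and $\bar{\delta}$, simultaneously (i) dominates the DA-CBF inequality \eqref{def:DAcbf2} and (ii) forces $\dot{h}_{\mathrm{M}}\geq 0$ on $\partial\C_{\mathrm{M}}(t)$ along the disturbed dynamics, so that the margins $L_h\mathrm{M}$ and $L_h\dot{\mathrm{M}}$ never need to be evaluated. First I would record the algebra I will reuse: by \eqref{eq:lipsh}, $h_{\mathrm{M}}(t,\tilde{x})=h(t,\tilde{x})-L_h\mathrm{M}(t)$; since $L_h\mathrm{M}(t)$ is independent of $\tilde{x}$, the Lie derivatives coincide, $L_{\tilde{f}}h_{\mathrm{M}}=L_{\tilde{f}}h$ and $L_{\tilde{g}}h_{\mathrm{M}}=L_{\tilde{g}}h$, while $\partial h_{\mathrm{M}}/\partial t=-L_h\dot{\mathrm{M}}(t)$; and $L_h\geq 0$, $\bar{\delta}(t)\geq 0$, $\mathrm{M}(t)\geq 0$.

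Step one is sufficiency for \eqref{def:DAcbf2}. Substituting $h_{\mathrm{M}}=h-L_h\mathrm{M}$ and $\alpha(h_{\mathrm{M}})=\alpha_c h_{\mathrm{M}}$ into \eqref{def:DAcbf2} and rearranging, that condition is equivalent to $L_{\tilde{f}}h+L_{\tilde{g}}h\,u\geq-\alpha_c h+L_h(\alpha_c\mathrm{M}+\dot{\mathrm{M}})$, whereas \eqref{def:DAcbf3} reads $L_{\tilde{f}}h+L_{\tilde{g}}h\,u\geq-\alpha_c h+\alpha_c\bar{\delta}$. The decay hypothesis in \eqref{eq:cor1}, $\dot{\mathrm{M}}\leq-\alpha_c\mathrm{M}$, yields $L_h(\alpha_c\mathrm{M}+\dot{\mathrm{M}})\leq 0\leq\alpha_c\bar{\delta}$, so the right-hand side of \eqref{def:DAcbf3} dominates that of \eqref{def:DAcbf2}; hence \eqref{def:DAcbf3} $\implies$ \eqref{def:DAcbf2}.

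Step two is tPSSf on $\C_{\mathrm{M}}(t)$. I would differentiate $h_{\mathrm{M}}$ along the disturbed system \eqref{eq:sys_aug}: using the Lie-derivative identities and $\delta=(\partial h_{\mathrm{M}}/\partial\tilde{x})\tilde{d}=(\partial h/\partial\tilde{x})\tilde{d}$ gives $\dot{h}_{\mathrm{M}}=L_{\tilde{f}}h+L_{\tilde{g}}h\,u+\delta-L_h\dot{\mathrm{M}}$. Applying \eqref{def:DAcbf3} together with the bound $\delta\geq-\bar{\delta}$, and then writing $h=h_{\mathrm{M}}+L_h\mathrm{M}$, I obtain $\dot{h}_{\mathrm{M}}\geq-\alpha_c h_{\mathrm{M}}-L_h(\alpha_c\mathrm{M}+\dot{\mathrm{M}})+(\alpha_c-1)\bar{\delta}$. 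Now both parts of \eqref{eq:cor1} enter: the decay condition makes $-L_h(\alpha_c\mathrm{M}+\dot{\mathrm{M}})\geq 0$, and $\alpha_c\geq 1$ makes $(\alpha_c-1)\bar{\delta}\geq 0$, so $\dot{h}_{\mathrm{M}}\geq-\alpha_c h_{\mathrm{M}}$ for all $\tilde{x}(t)\in\C_{\mathrm{M}}(t)$. At $h_{\mathrm{M}}(t,\tilde{x})=0$ this gives $\dot{h}_{\mathrm{M}}\geq 0$, and since $0$ is a regular value of $h_{\mathrm{M}}$ (so $\partial h_{\mathrm{M}}/\partial\tilde{x}\neq 0$ on $\partial\C_{\mathrm{M}}(t)$), Nagumo's theorem \cite{blanchini2008set} renders $\C_{\mathrm{M}}(t)$ forward invariant, i.e. the disturbed system is tPSSf on $\C_{\mathrm{M}}(t)$; combined with $h(x,x_{\mu})\geq h_{\mathrm{M}}\geq 0$ from \eqref{eq:lipsh}, this preserves the true safe set. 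Note that neither $L_h$ nor $\mathrm{M}$ survives in \eqref{def:DAcbf3}, which is the practical payoff.

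I expect the main obstacle to be that \eqref{def:DAcbf3} must absorb two independent margins at once—the differentiation-error margin $L_h\mathrm{M}$ and the disturbance margin $\bar{\delta}$—so the delicate point is tracking signs so that the correct hypothesis annihilates each. The decay condition $\dot{\mathrm{M}}\leq-\alpha_c\mathrm{M}$ is exactly what eliminates the $\mathrm{M}$-dependent terms, whereas $\alpha_c\geq 1$ is indispensable (and cannot be relaxed) for the disturbance residual $(\alpha_c-1)\bar{\delta}$ that remains after invoking $\delta\geq-\bar{\delta}$. A secondary check is that these estimates hold pointwise throughout $\C_{\mathrm{M}}(t)$, not merely on its boundary, so that any controller meeting \eqref{def:DAcbf3} is admissible wherever the filter is active.
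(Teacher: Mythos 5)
Your proof is correct, and its first half is exactly the paper's argument: substitute $h_{\mathrm{M}}=h-L_h\mathrm{M}$ and $\alpha(s)=\alpha_c s$ into \eqref{def:DAcbf2}, use that the Lie derivatives of $h_{\mathrm{M}}$ and $h$ coincide, and dominate the residual via $L_h\bigl(\alpha_c\mathrm{M}(t)+\dot{\mathrm{M}}(t)\bigr)\leq 0\leq\alpha_c\bar{\delta}(t)$ to get \eqref{def:DAcbf3} $\Rightarrow$ \eqref{def:DAcbf2}.

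The second half takes a genuinely different route. The paper differentiates $h$ (the margin-free barrier) along the disturbed dynamics \eqref{eq:sys_aug} and shows $\dot{h}\geq-\alpha_c h$ from \eqref{def:DAcbf3}, $\delta\geq-\bar{\delta}$ and $\alpha_c\geq 1$, thereby concluding forward invariance of the \emph{larger} set $\C(t)=\{h\geq 0\}$ — the literal object named in Definition~\ref{def:tpssf}. You instead differentiate $h_{\mathrm{M}}$ itself, obtaining $\dot{h}_{\mathrm{M}}\geq-\alpha_c h_{\mathrm{M}}-L_h\bigl(\alpha_c\mathrm{M}+\dot{\mathrm{M}}\bigr)+(\alpha_c-1)\bar{\delta}\geq-\alpha_c h_{\mathrm{M}}$, and conclude invariance of the \emph{smaller} robustified set $\C_{\mathrm{M}}(t)$. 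Your version buys two things: the conclusion is stronger (the trajectory never leaves $\C_{\mathrm{M}}(t)$, hence by \eqref{eq:lipsh} never leaves the true safe set, rather than merely remaining in $\C(t)$), and the Nagumo step is applied on $\partial\C_{\mathrm{M}}(t)\subset\C_{\mathrm{M}}(t)$, i.e., inside the region where the controller is actually assumed to satisfy \eqref{def:DAcbf3} — whereas the paper's argument (inherited from Theorem~\ref{theo:main}) needs $\dot{h}\geq 0$ on $\partial\C(t)$, a region outside the stated hypothesis set, a gap your formulation avoids. The cost is a formal mismatch with Definition~\ref{def:tpssf}: you prove invariance of $\C_{\mathrm{M}}(t)$, not of $\C(t)$ from arbitrary initial conditions in $\C(t)$; for initial conditions in $\C_{\mathrm{M}}(t)$ — the only regime either proof's hypotheses cover — your conclusion implies the paper's, so the discrepancy is one of phrasing of the tPSSf property rather than of substance.
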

\begin{proof} 
From the time derivative of ${h_{\mathrm{M}}}$ along \eqref{system} and ${\mathrm{F}}$, i.e., the system dynamics \eqref{eq:sys_aug} without $\tilde{d}$, and \eqref{eq:cor1}, \eqref{def:DAcbf3} we have:
\begin{align*}
%\label{def:corp1}
   L_{\tilde{f}} h(\tilde{x}) \!+\! L_{\tilde{g}} h(\tilde{x})  u  \!\geq\!\! -\alpha_c h(\tilde{x})
     \!+\! \alpha_c L_h {\mathrm{M}}(t) \!+\! L_h \Dot{\mathrm{M}}(t) \\
    \overset{\text{with} \ \Dot{\mathrm{M}} \leq -\alpha_c  \mathrm{M}}{\Longleftarrow} L_{\tilde{f}} h(\tilde{x}) + L_{\tilde{g}} h(\tilde{x})  u - \alpha_c \Bar{\delta}(t) \!\geq\! -\alpha_c h(\tilde{x});
\end{align*}
therefore ${h_{\mathrm{M}}}$ is a DA-CBF with \eqref{eq:cor1}, \eqref{def:DAcbf3}. 
Using the notion of tPSSf, let's analyze the safety of disturbed system \eqref{eq:sys_aug}. From the time derivative of ${h}$ along \eqref{eq:sys_aug}, and \eqref{eq:cor1}, \eqref{def:DAcbf3} we have:
\begin{align*}
%\label{def:corp1}
   L_{\tilde{f}} h(\tilde{x}) \!+\! L_{\tilde{g}} h(\tilde{x})  u  \!-\! \alpha_c \Bar{\delta}(t) \!\geq\! -\alpha_c h(\tilde{x})
      \\
    \overset{\text{with} \ \alpha_c \geq 1  }{\Longrightarrow} \Dot{h}(t,\!\tilde{x},\!u) \!=\! L_{\tilde{f}} h(\tilde{x}) \!+\! L_{\tilde{g}} h(\tilde{x})  u \!+\!  \delta(t,\!\tilde{x}) \!\geq\! -\alpha_c h(\tilde{x}).
\end{align*}
\end{proof}
We remark that, based on Corollary~\ref{theo:affine}, \eqref{def:DAcbf3} can be utilized to replace the DA-CBF constraint within the DA-CBF-QP.

%%%%%%%%%%%%%%%%%%%%%%%%%%%%%%%%%%%%%%%%%%%%%%%%%%%%
%%%%%%%%%%%%%%%%%%%%%%%%%%%%%%%%%%%%%%%%%%%%%%%%%%%%
\section{Rollover Prevention: Theory and Application}
\label{sec:problem}
%%%%%%%%%%%%%%%%%%%%%%%%%%%%%%%%%%%%%%%%%%%%%%%%%%%%
%%%%%%%%%%%%%%%%%%%%%%%%%%%%%%%%%%%%%%%%%%%%%%%%%%%%
This section presents a derivation of the safety constraints for the roll motion of a mobile robot via {\em zero moment point}, also referred to as {\em zero-tilting moment point}, (ZMP) criterion, leading to the formulation of a (time-varying) CBF. We leverage the main theoretic result of this paper to demonstrate rollover prevention experimentally. 

\subsection{Rollover CBF Synthesis}
Mobile robots are difficult to model exactly. In practice, it is common to use a simplified model for the design of a mobile robot controller, such as the following model:
\begin{equation}
\label{eq:uni_nom}
    \underbrace{
    \begin{bmatrix}
    \dot{x}^\mathcal{I} \\
    \dot{y}^\mathcal{I} \\ 
    \Dot{\theta} \\
     \dot{\omega} \\
         \dot{v} 
     \end{bmatrix}}_{\dot{x}}
     = 
     \underbrace{\begin{bmatrix}
     v \cos{\theta}  \\
    v  \sin{\theta} \\
    \omega \\
     -\tau_{\omega} {\omega} \\ -\tau_v {v}  
     \end{bmatrix}}_{{f}(x)} +
     \underbrace{
     \begin{bmatrix}
    0 & 0    \\
     0 & 0  \\
      0 & 0 \\
    0 & \tau_{\omega} \\
    \tau_v & 0 
    \end{bmatrix}}_{{g}(x)}
    \underbrace{
    \begin{bmatrix}
        u_{v} \\ u_{\omega}
    \end{bmatrix}}_{u} + d(t) ,
\end{equation}
where ${d\!:\! \mathbb{R}^+_0 \!\to\! \mathbb{R}^5}$ is the disturbance, ${ \big[ {x}^\mathcal{I} ~ {y}^\mathcal{I} \big]^\top \!\in\! \mathbb{R}^2}$ is the vehicle's planar position with respect to the inertial frame ${\mathcal{I}}$, $\theta$ is the vehicle's yaw angle, ${  v }$ is its linear velocity, ${\omega  }$ is its angular velocity (see Fig.~\ref{fig:robot6d}), and ${1/{\tau_{v}}, 1/{\tau_{\omega}} \!>\! 0}$ represent the time constants of the electromechanical actuation system. This model is adopted from \cite{mourikis2007} by assuming that the center of gravity (CG) of the robot intersects with its center of rotation.
\begin{figure}
\centering
%\vspace{0.12cm}
\begin{subfigure}{0.49\columnwidth}
\centering\includegraphics[scale=0.45]{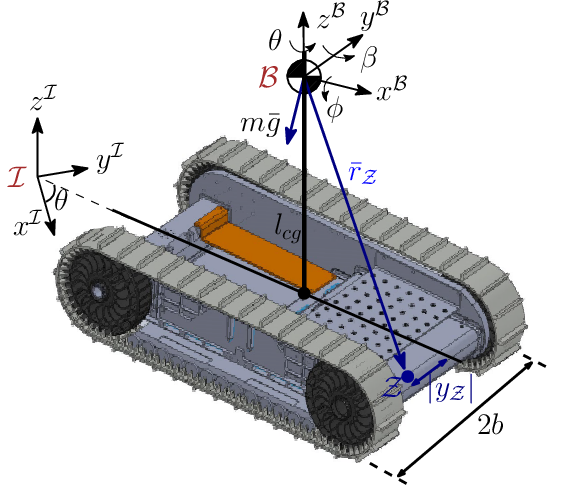}
\end{subfigure} 
\hfill
\begin{subfigure}{0.49\columnwidth}
\centering\includegraphics[scale=0.43]{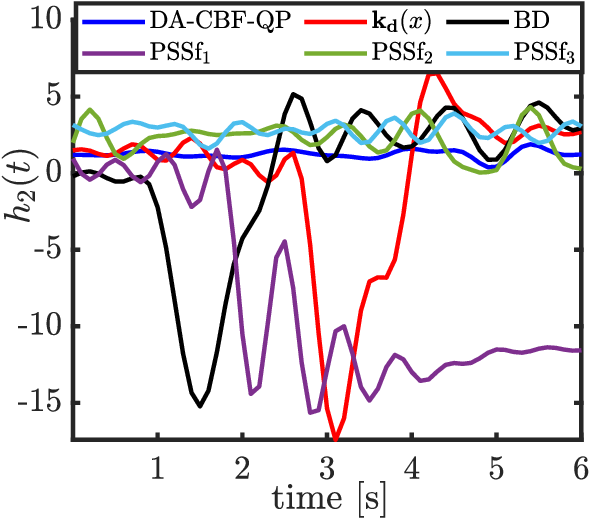}
\end{subfigure} 
\caption{Zero moment point illustration for a mobile robot (Left). The simple backward differentiator (BD) results in noisy values that cause safety violations. The system leaves the safe set for a small and time-invariant projected disturbance bound, ${\Bar{\delta} \!=\! \Bar{\delta}_1}$. Choosing ${\Bar{\delta} \!=\! \Bar{\delta}_2 }$ or ${\Bar{\delta} \!=\! \Bar{\delta}_3 }$, where ${\Bar{\delta}_1 \!<\! \Bar{\delta}_2 \!<\! \Bar{\delta}_3  }$, ensures that ${h_2(t) \geq 0}$, but at the cost of decreased performance due to added conservatives (Right).}
\vskip - 4mm
\label{fig:robot6d}
\end{figure}

The ZMP is the point on the ground where the gravity and inertia forces create only a non-zero moment about the direction of the plane normal, resulting in zero tipping moment \cite{sardain2004}. We compute the mobile robot's ZMP relative to the ground plane and constrain it so that the vehicle does not tip over. The ZMP-based rollover constraint is given by ${\forall t \!\geq\! 0}$:
\begin{equation}
\label{eq:tip}
  | y_\mathcal{Z}(t) | \leq {b} ,
\end{equation}
where ${y_\mathcal{Z}}$ is the lateral component of the ZMP, and $b$ is the half width of the robot. To obtain ${y_\mathcal{Z}}$, we model the orientation of the body-fixed frame relative to the fixed world frame via roll, pitch, and yaw  Euler angles ${\phi,\beta,\theta}$, respectively. The angular rates, angular accelerations, and linear accelerations are given by ${\Bar{\omega} \!=\! \big[
          \Dot{\phi} ~\Dot{\beta} ~ \omega
      \big]^\top ;\ 
      \Bar{\alpha}  \!=\!
      \big[
          \Ddot{\phi} ~ \Ddot{\beta} ~ \Dot{\omega}
      \big]^\top ;\ 
      \Bar{a} \!=\! 
      \big[
\Ddot{x}^\mathcal{B} ~ \Ddot{y}^\mathcal{B} ~ \Ddot{z}^\mathcal{B}
      \big]^\top}$, respectively. 
The robot's rigid body inertia tensor is given by ${I \!=\! \diag(I_{x}, I_{y}, I_{z})}$, and $m$ is total mass. Assuming zero total forces in the ${y}^\mathcal{B}$ and ${z}^\mathcal{B}$ directions, as well as zero moments in the ${x}^\mathcal{B}$ and ${y}^\mathcal{B}$ directions, we have:
\begin{align}
\begin{split}
\label{eq:acc}
\Ddot{y}^\mathcal{B} = -v \!~\omega ; \quad
\Ddot{z}^\mathcal{B} = 0 , \\
I_{x} \Ddot{\phi} + (I_{y} - I_{z})  \Dot{\beta} \!~\omega  = 0.
\end{split}
\end{align}

In Fig.~\ref{fig:robot6d}, $\mathcal{Z}$ is the ZMP point, and ${\Bar{r}_{\mathcal{Z}} \!=\! [x_\mathcal{Z}~y_\mathcal{Z}~\!-\!l_{cg}]^\top }$, where $l_{cg}$ is the distance of the robot's center of mass from the ground. For the sake of simplicity, we also assume that $l_{cg}$ is known. The moment vector about the ZMP is given by
\begin{equation}
\label{eq:zmp}
{M}_{\mathcal{Z}} =  ( \Bar{r}_{\mathcal{Z}} \times m \Bar{a} ) +   ( \Bar{r}_{\mathcal{Z}} \times m  \Bar{g} ) + ( I  \Bar{\alpha} + \Bar{\omega} \times I \Bar{\omega} ), 
\end{equation}
where ${\Bar{g} \!=\! [g_x ~ g_y ~ g_z ]^\top}$ is the gravity vector expressed in the body-fixed frame $\mathcal{B}$. From the definition of the ZMP, the moment at the ZMP must satisfy that
\begin{equation} 
\label{eq:zmpsolve}
   {M}_{\mathcal{Z}} = 
   \begin{bmatrix}
       0&0&M_{\mathcal{Z}_z}
   \end{bmatrix}^\top .
\end{equation}
Then solving \eqref{eq:zmp} with \eqref{eq:zmpsolve} yields:
\begin{equation}
\label{eq:ymp_jwb}
y_{\mathcal{Z}} \!=\! 
\dfrac{ -m \Ddot{y}^\mathcal{B} l_{cg} \!-\! m  g_y l_{cg} \!-\! (I_{x} \Ddot{\phi} \!+\! (I_{y} \!-\! I_{z})  \Dot{\beta} \!~\omega) 
  }{ m \Ddot{z}^\mathcal{B} + m  g_z } ,
\end{equation}
and substituting \eqref{eq:acc} into \eqref{eq:ymp_jwb} yields:
\begin{equation}
\label{eq:yzmp_fin}
y_{\mathcal{Z}} \!=\! 
( v \!~\omega \!~l_{cg} -   g_y \!~l_{cg} )/ g_z  .
\end{equation}
From \eqref{eq:yzmp_fin} and \eqref{eq:tip} we obtain two different time-dependent safety constraints:
\begin{align}
\begin{split}
\label{eq:safety_cbf1}
h_1(t,x) &= v \!~\omega -    {b }/{l_{cg}}\!~{g_z}(t)  - g_y(t) \geq  0   \\ 
 h_2(t,x) &=
-v \!~\omega -    {b }/{l_{cg}}\!~{g_z}(t)  +g_y(t) \geq  0 , 
\end{split}
\end{align}
where ${g_z, g_y}$ are measurable noisy parameters. The function $h_1$ represents safety on the right, while $h_2$ represents the left. Note that ${h_1}$ and ${h_2}$ are affine in ${\left [ {g_y}(t)~g_z(t)\right]^\top}$.
\begin{remark}
\label{re:ICBFs}
In the unicycle model, safety constraints \eqref{eq:safety_cbf1} depend on control inputs $v$ and $\omega$. Inspired by integral CBFs \cite{ames2020integral}, which generalize control input-dependent CBFs, we extend the unicycle model with first-order actuator dynamics as given in \eqref{eq:uni_nom}, where ${u_{v}, u_{\omega}}$ are the new control inputs.
\end{remark}

\subsection{Experimental Validation} 
\label{sec:exper}
We apply our results to an unmanned ground mobile robot, a tracked GVR-Bot from the US Army DEVCOM Ground Vehicle Systems Center. Our Python and C++ algorithms run on a custom compute payload that is based on an NVIDIA Jetson AGX Orin. Vision is provided by three synchronized Intel Realsense D457 depth cameras, and a Vectornav VN-100, an inertial measurement unit (IMU), provides inertial measurements. For the test vehicle and onboard computation details, see Section IV in \cite{janwani2023learning}. We conducted experimental tests on an approximately 27$^{\circ}$ inclined surface, which can cause rollover and slip-induced model uncertainties.

We first designed a nominal controller ${\mathbf{k_d}}$:
\begin{equation*}
    %\label{eq:uni_pid}
    \mathbf{k_d}(x) = \begin{bmatrix}
        K_v  d_g  &
        K_{\omega}  {y_g - y^\mathcal{I}}/{d_g} - K_{\omega} \sin{\theta} 
    \end{bmatrix}^\top ,
\end{equation*}
where ${K_v,\!~K_{\omega} \!\geq\! 0}$ are the controller gains, ${x_g,\!~y_g }$ are the goal position of the robot, and ${d_g \!\triangleq\! \|  x_g \!-\! x^\mathcal{I}, \!~ y_g \!-\!y^\mathcal{I} \| }$.
The inputs were constrained such that ${u_v \!\in\! [-3, 3]}$ m/s, and ${u_\omega \!\in\! [-2, 2]}$ rad/s. The goal position ${x_g,\!~y_g }$ is chosen to yield a safety constraint violation when using the nominal controller. The control loop operated at 50 Hz, and the states were estimated by fusing camera data with inertial measurements. The values of ${\tau_{\omega}, \tau_v}$ in model \eqref{eq:uni_nom} are obtained through a system identification process. The differentiator \eqref{eq:HGO} operates on noisy accelerometer signals ${g_z}$ and ${g_y}$.

We compared the proposed method to the PSSf with time-invariant bounds such that ${0 \!<\! \Bar{\delta}_1 \!<\! \Bar{\delta}_2 \!<\! \Bar{\delta}_3  }$. An increased value of $\Bar{\delta}$ results in a wider gap between the set $\C_{\delta}(t)$ and the forward invariant set $\C(t)$. Consequently, conservative trajectories are produced that remain within $\C(t)$. Conversely, a smaller value of $\Bar{\delta}$ moves $\C_{\delta}(t)$ closer to $\C(t)$, but allowing trajectories to escape from $\C(t)$ in the presence of larger disturbances, as can be observed in \eqref{eq:subset}. Additionally, to show the effectiveness of the proposed differentiator-based method, we obtained the derivative of time-varying parameters with the \textit{backward differentiator} that utilizes the last three data points:
\begin{equation}
\label{eq:BD}
\dot{p}_0 (t_n) =  ({3p_{n} - 4p_{n-1} + p_{n-2}})/({2 T_s}),
\end{equation}
where $T_s$ is the sampling rate, $t_n$ is the sampling time, and ${p_{n}}$, ${p_{n-1}}$, ${p_{n-2}}$ denote the last three measurements, respectively. 

The results of the experiments\footnote{See video at: \url{https://youtu.be/Ekek2ikFU24}} are presented in Fig~\ref{fig:main} and Fig~\ref{fig:robot6d}. These figures illustrate that the proposed method assures the safety of the uncertain system by filtering the unsafe nominal controller through the derived CBFs. Furthermore, the robot reaches ${x_g,\!~y_g }$ as the function $\Bar{\delta}(t)$ was designed to be close enough to $\delta$ along the trajectory. However, the robot trajectory using a PSSf approach with $\Bar{\delta}_1$ ($\text{PSSf}_1$) leaves the safe set due to the violation of the assumption: ${|\delta|_\infty { \leq \bar{\delta}_1}}$. Although PSSf with $\Bar{\delta}_2$ and $\Bar{\delta}_3$ ($\text{PSSf}_2$ and $\text{PSSf}_3$) maintain safety, they yield conservative trajectories, which shows the performance improvement of tPSSf compared to PSSf.

The performance of the CBF-QP controller with differentiator \eqref{eq:BD} is also shown in Fig~\ref{fig:robot6d}. Observe that due to sensor noises and differentiation errors, this controller violates safety. This highlights the importance of robust differentiation along a provable convergence guarantee, as achieved by Theorem~\ref{theo:main2}.

\section{Conclusion} 
\label{sec:conc}
This study developed a rollover prevention method for mobile robots using CBFs and ZMP-based safety measures. A robust safety-critical controller was proposed, incorporating the ISS differentiator dynamics and the notion of PSSf. Experiments conducted on a tracked robot demonstrated the effectiveness of the method in preventing rollover.

\begin{spacing}{0.92}
\bibliographystyle{IEEEtran}
\section*{References}
\vspace{-3 mm}
\bibliography{Bib/Journal_bib}

% Generated by IEEEtran.bst, version: 1.14 (2015/08/26)
\begin{thebibliography}{10}
\providecommand{\url}[1]{#1}
\csname url@samestyle\endcsname
\providecommand{\newblock}{\relax}
\providecommand{\bibinfo}[2]{#2}
\providecommand{\BIBentrySTDinterwordspacing}{\spaceskip=0pt\relax}
\providecommand{\BIBentryALTinterwordstretchfactor}{4}
\providecommand{\BIBentryALTinterwordspacing}{\spaceskip=\fontdimen2\font plus
\BIBentryALTinterwordstretchfactor\fontdimen3\font minus \fontdimen4\font\relax}
\providecommand{\BIBforeignlanguage}[2]{{%
\expandafter\ifx\csname l@#1\endcsname\relax
\typeout{** WARNING: IEEEtran.bst: No hyphenation pattern has been}%
\typeout{** loaded for the language `#1'. Using the pattern for}%
\typeout{** the default language instead.}%
\else
\language=\csname l@#1\endcsname
\fi
#2}}
\providecommand{\BIBdecl}{\relax}
\BIBdecl

\bibitem{medeiros2020}
V.~S. Medeiros, E.~Jelavic, M.~Bjelonic, R.~Siegwart, M.~A. Meggiolaro, and M.~Hutter, ``Trajectory optimization for wheeled-legged quadrupedal robots driving in challenging terrain,'' \emph{IEEE Robot. Autom. Lett.}, vol.~5, no.~3, pp. 4172--4179, 2020.

\bibitem{jeon2023planned}
S.-Y. Jeon, R.~Chung, and D.~Lee, ``Planned trajectory classification for wheeled mobile robots to prevent rollover and slip,'' \emph{IEEE Robot. Autom. Lett.}, vol.~8, no.~7, pp. 4227--4234, 2023.

\bibitem{roan2010real}
P.~R. Roan, A.~Burmeister, A.~Rahimi, K.~Holz, and D.~Hooper, ``Real-world validation of three tipover algorithms for mobile robots,'' in \emph{2010 IEEE Int. Conf. Robot. Autom.}\hskip 1em plus 0.5em minus 0.4em\relax IEEE, 2010, pp. 4431--4436.

\bibitem{de2019trajectory}
Y.~D. Viragh, M.~Bjelonic, C.~D. Bellicoso, F.~Jenelten, and M.~Hutter, ``Trajectory optimization for wheeled-legged quadrupedal robots using linearized zmp constraints,'' \emph{IEEE Robot. Autom. Lett.}, vol.~4, no.~2, pp. 1633--1640, 2019.

\bibitem{song2023chance}
J.~Song, A.~Petraki, B.~J. DeHart, and I.~Sharf, ``Chance-constrained rollover-free manipulation planning with uncertain payload mass,'' \emph{IEEE/ASME Trans. Mechatronics}, pp. 1--11, 2023.

\bibitem{lee2012rollover}
S.~Lee, M.~Leibold, M.~Buss, and F.~C. Park, ``Rollover prevention of mobile manipulators using invariance control and recursive analytic {ZMP} gradients,'' \emph{Adv. Robot.}, vol.~26, no. 11-12, pp. 1317--1341, 2012.

\bibitem{ames2017control}
A.~D. Ames, X.~Xu, J.~W. Grizzle, and P.~Tabuada, ``Control barrier function based quadratic programs for safety critical systems,'' \emph{IEEE Trans. Autom. Control}, vol.~62, no.~8, pp. 3861--3876, 2017.

\bibitem{wabersich2023data}
K.~P.~W. et~al., ``Data-driven safety filters: Hamilton-{J}acobi reachability, control barrier functions, and predictive methods for uncertain systems,'' \emph{IEEE Control Syst. Mag.}, vol.~43, no.~5, pp. 137--177, 2023.

\bibitem{taylor2020}
A.~J. Taylor, A.~Singletary, Y.~Yue, and A.~D. Ames, ``A control barrier perspective on episodic learning via projection-to-state safety,'' \emph{IEEE Control Syst. Lett.}, vol.~5, no.~3, pp. 1019--1024, 2020.

\bibitem{kolathaya2018}
S.~Kolathaya and A.~D. Ames, ``Input-to-state safety with control barrier functions,'' \emph{IEEE Control Syst. Lett.}, vol.~3, no.~1, pp. 108--113, 2018.

\bibitem{molnar2021safety}
T.~G. Molnar, A.~K. Kiss, A.~D. Ames, and G.~Orosz, ``Safety-critical control with input delay in dynamic environment,'' \emph{IEEE Transactions on Control Systems Technology}, vol.~31, no.~4, pp. 1507--1520, 2023.

\bibitem{tezuka2020}
I.~Tezuka and H.~Nakamura, ``Time-varying obstacle avoidance by using high-gain observer and input-to-state constraint safe control barrier function,'' \emph{IFAC-PapersOnLine}, vol.~53, no.~5, pp. 391--396, 2020.

\bibitem{daleckii2002}
J.~L. Daleckii and M.~G. Krein, \emph{Stability of solutions of differential equations in {B}anach space}.\hskip 1em plus 0.5em minus 0.4em\relax American Mathematical Soc., Providence, RI, USA, 1974.

\bibitem{dacs2022robust}
E.~Da{\c{s}} and R.~M. Murray, ``Robust safe control synthesis with disturbance observer-based control barrier functions,'' in \emph{IEEE Conf. Decision and Control}.\hskip 1em plus 0.5em minus 0.4em\relax IEEE, 2022, pp. 5566--5573.

\bibitem{blanchini2008set}
F.~Blanchini and S.~Miani, \emph{Set-theoretic methods in control}.\hskip 1em plus 0.5em minus 0.4em\relax Springer, 2008, vol.~78.

\bibitem{seeber2021robust}
R.~Seeber, H.~Haimovich, M.~Horn, L.~M. Fridman, and H.~De~Battista, ``Robust exact differentiators with predefined convergence time,'' \emph{Automatica}, vol. 134, p. 109858, 2021.

\bibitem{khalil2014high}
H.~K. Khalil and L.~Praly, ``High-gain observers in nonlinear feedback control,'' \emph{Int. J. Robust and Nonlinear Control}, vol.~24, no.~6, pp. 993--1015, 2014.

\bibitem{astolfi2018low}
D.~Astolfi, L.~Marconi, L.~Praly, and A.~R. Teel, ``Low-power peaking-free high-gain observers,'' \emph{Automatica}, vol.~98, pp. 169--179, 2018.

\bibitem{agrawal2022safe}
D.~R. Agrawal and D.~Panagou, ``Safe and robust observer-controller synthesis using control barrier functions,'' \emph{IEEE Control Syst. Lett.}, vol.~7, pp. 127--132, 2022.

\bibitem{mourikis2007}
A.~I. Mourikis, N.~Trawny, S.~I. Roumeliotis, D.~M. Helmick, and L.~Matthies, ``Autonomous stair climbing for tracked vehicles,'' \emph{Int. J. Robotics Res.}, vol.~26, no.~7, pp. 737--758, 2007.

\bibitem{sardain2004}
P.~Sardain and G.~Bessonnet, ``Forces acting on a biped robot. {C}enter of pressure-zero moment point,'' \emph{IEEE Trans. Syst. Man Cybern. A: Syst. Hum.}, vol.~34, no.~5, pp. 630--637, 2004.

\bibitem{ames2020integral}
A.~D. Ames, G.~Notomista, Y.~Wardi, and M.~Egerstedt, ``Integral control barrier functions for dynamically defined control laws,'' \emph{IEEE Control Syst. Lett.}, vol.~5, no.~3, pp. 887--892, 2020.

\bibitem{janwani2023learning}
N.~C. Janwani, E.~Da{\c{s}}, T.~Touma, S.~X. Wei, T.~G. Molnar, and J.~W. Burdick, ``A learning-based framework for safe human-robot collaboration with multiple backup control barrier functions,'' in \emph{2024 IEEE Int. Conf. on Robot. and Autom. (ICRA)}, 2024.

\end{thebibliography}
%\vspace{-3 mm}
\end{spacing}

\end{document}